\documentclass[12pt,onecolumn,draftclsnofoot,journal]{IEEEtran}
\usepackage{graphicx}
\usepackage{float}
\usepackage{epstopdf}
\usepackage[cmex10]{amsmath}
\usepackage{array}
\usepackage{cite}
\usepackage{amssymb}
\usepackage{amsfonts}
\usepackage{amsmath}
\usepackage{stackrel}
\usepackage{booktabs,multirow}
\usepackage{arydshln}
\usepackage{slashbox}
\usepackage{amsthm}
\usepackage{listings}
\usepackage{algorithm}
\usepackage{algorithmicx}
\usepackage{algpseudocode}
\usepackage{threeparttable}


\newtheorem{theo}{Theorem}

\newfloat{routine}{htbp}{loa}
\floatname{routine}{Routine}

\makeatletter
\newcommand{\algmargin}{\the\ALG@thistlm}
\makeatother
\newlength{\forwidth}
\settowidth{\forwidth}{\algorithmicfor\ }
\algdef{SE}[parFOR]{parFor}{EndparFor}[1]
  {\parbox[t]{\dimexpr\linewidth-\algmargin}{
     \hangindent\forwidth\strut\algorithmicfor\ #1\ \algorithmicdo\strut}}{\algorithmicend\ \algorithmicfor}
\algnewcommand{\parState}[1]{\State
  \parbox[t]{\dimexpr\linewidth-\algmargin}{\strut #1\strut}}

\newlength{\ifwidth}
\settowidth{\ifwidth}{\algorithmicif\ }
\algdef{SE}[parIF]{parIf}{EndparIf}[1]
  {\parbox[t]{\dimexpr\linewidth-\algmargin}{
     \hangindent\ifwidth\strut\algorithmicif\ #1\ \algorithmicdo\strut}}{\algorithmicend\ \algorithmicif}

\hyphenation{op-tical net-works semi-conduc-tor}

\begin{document}

\title{Hardness Results on Finding Leafless Elementary Trapping
Sets and Elementary Absorbing Sets of LDPC Codes}
\author{Ali Dehghan, and Amir H. Banihashemi,\IEEEmembership{ Senior Member, IEEE}}

\maketitle


\begin{abstract}
Leafless elementary trapping sets (LETSs) are known to be the problematic structures in the error floor region of low-density parity-check (LDPC) codes over the additive white Gaussian (AWGN) channel under iterative decoding algorithms. While problems involving the general category of trapping sets, and the subcategory of elementary trapping sets (ETSs), have been shown to be NP-hard, similar results for LETSs, which are a subset of  ETSs are not available. In this paper, we prove that, for a general LDPC code, finding a LETS of a given size $a$ with minimum number of odd-degree check nodes $b$ is NP-hard to approximate within any approximation factor. We also prove that finding the minimum size $a$ of a LETS with a given $b$ is NP-hard to approximate within any approximation factor. Similar results are proved for elementary absorbing sets, a popular subcategory of LETSs.

\begin{flushleft}
\noindent {\bf Index Terms:}
Low-density parity-check (LDPC) codes, trapping sets (TS), elementary trapping sets (ETS), leafless elementary trapping sets (LETS), absorbing sets (ABS), elementary absorbing sets (EABS), computational complexity, NP-hardness.
\end{flushleft}

\end{abstract}

\section{introduction}

The error floor of low-density parity-check (LDPC) codes under iterative decoding algorithms is attributed to certain combinatorial structures in the Tanner graph of the code, collectively referred to as {\em trapping sets}. A trapping set is often classified by its size $a$ and the number of odd-degree (unsatisfied) check nodes $b$ in its induced subgraph. In this case, the trapping set is said to belong to the $(a,b)$ {\em class}. The problematic trapping sets that cause the error floor depend not only on the Tanner graph of the code, but also on the channel model, quantization scheme and the decoding algorithm. For variable-regular LDPC codes (both random and structured) over the additive white Gaussian noise (AWGN) channel, the culprits are known to be a subcategory of trapping sets, called {\em leafless elementary trapping sets (LETSs)}~\cite{MR3252383},~\cite{hashemi2015new}. Leafless ETSs are also the majority of problematic structures in the error floor region of irregular LDPC codes over the AWGN channel~\cite{HB-irregular}. The term ``elementary'' indicates that all the check nodes in the induced subgraph of the trapping set have degree one or two, and the term ``leafless'' means that each variable node is connected to at least two satisfied check nodes, i.e., the {\em normal graph}~\cite{MR3252383} of the trapping set contains no leaf.

Elementary TSs have been reported in numerous publications, including the pioneering work
of Richardson~\cite{Richardson}, to be the most harmful of TSs, based on simulations. Recently, also,
theoretical results were presented in~\cite{hashemilower} that demonstrated the smallest size $a$ of non-elememtary
$(a,b)$ TSs (NETSs) is generally larger than the smallest size of ETSs with the same $b$ value. Considering
that for a given $b$, TSs with smaller size $a$ are generally more harmful, the result of~\cite{hashemilower} provided
a theoretical justification for why ETSs are the most harmful among all TSs. More evidence that ETSs are more harmful than NETSs was most recently provided in~\cite{lastyoones}, where the authors examined a large number of LDPC codes and demonstrated, using exhaustive search, that dominant classes of trapping sets are those of ETSs. Moreover, it has been
observed that for many LDPC codes, the dominant ETSs in the error floor are those caused by
a combination of multiple cycles, where each variable node is part of at least one cycle, i.e., the
dominant ETSs are leafless, see, e.g.,~\cite{case, ZLT-TCOM-2010, karimi2012efficient}, and the references therein. This motivates our investigation into the computational complexity of finding LETSs in this paper.

For a given LDPC code, the knowledge of TSs, in general, and that of LETSs, in particular, is useful in estimating the error floor~\cite{cole,TB-2014}, devising decoding algorithms~\cite{kyung2012finding}, or designing codes~\cite{RR2, RR3} with low error floors. In any such application, one would need to find a list of dominant TSs which are the main contributors to the error floor. While the topic of relative harmfulness of different trapping sets with relation to the channel model, the decoding algorithm, and the quantization is still not fully understood, it is generally accepted that TSs with smaller values of $a$ and $b$ are more harmful. Counterexamples, however, exist where trapping sets with larger $a$ or $b$ values are more harmful than those with smaller values for these parameters~\cite{MR2292873}. There are also many examples where the relative harmfulness of trapping sets in different classes can change depending on the decoding algorithm, the channel model or the quantization scheme. It is thus of interest to devise algorithms that can find TSs in a wide range of $a$ and $b$ values. Motivated by this, there has been a flurry of research activity to characterize different categories of trapping sets and to devise fast search algorithms to find them~\cite{MR3252383},~\cite{hashemi2015new},~\cite{HB-irregular},~\cite{lastyoones},~\cite{kyung2012finding},~\cite{karimi2012efficient},~\cite{MR2951331},~\cite{hashemi2015characterization},~\cite{falsafain2016exhaustive}.
In particular, recently, Hashemi and Banihashemi~\cite{hashemi2015new} proposed a characterization of LETSs for variable-regular Tanner graphs based on a hierarchy of graphical structures that starts from a simple cycle and expands recursively to reach the targeted LETS. The characterization, referred to as {\em dpl}, is based on three simple expansion techniques, dubbed {\em dot (degree-one-tree), path}, and {\em lollipop}, and has the property that it generates a LETS at each and every step of the expansion process. Within this framework, the authors of~\cite{hashemi2015new} proved the optimality of {\em dpl} search, in that, it can find LETSs within a certain range of $a \leq a_{max}$ and $b \leq b_{max}$, exhaustively, by starting from simple cycles whose maximum length is minimized, and by generating the minimum number of undesirable structures that are outside the range of interest. The complexity of the $dpl$ search of \cite{hashemi2015new} depends highly on the multiplicity of short simple cycles in the graph and the number of LETS structures in different classes within the range of the search. It was shown in~\cite{dehghan2016new} that the multiplicities of simple cycles of different fixed lengths in random bipartite graphs tend to independent Poisson random variables with fixed expected values, as the size of the graph tends to infinity (for fixed degree distributions). It was further shown in~\cite{IT2cycle} that the average multiplicity of LETS structures within different $(a,b)$ classes, where $a$ and $b$ are fixed values, tends to zero or a non-zero constant asymptotically. 
With regards to the overall complexity of the $dpl$ search algorithm, it was proved in~\cite{HB-irregular} that,
assuming fixed values of $a_{max}$ and $b_{max}$ and a fixed maximum node degree and girth for the Tanner graph,
the worst-case complexity increases linearly with the code's block length $n$. It was further shown
in~\cite{HB-irregular} that the average complexity of the $dpl$ search, excluding that of the search for the input
simple cycles, is constant in $n$. While these results verify the efficiency of the $dpl$ search for
small (fixed) values of $a_{max}$ and $b_{max}$, they do not provide much insight into the computational complexity of the algorithm as a function of these parameters. In addition, it is of interest to know the inherent difficulty of finding LETSs regardless of the algorithm used.  In particular, we are interested in whether it is possible to find a polynomial time algorithm that can find LETSs of LDPC codes. The main contribution of this paper is to provide a negative answer to this question, in general. We recall that, for a given LDPC code, in applications involving error floor, one would be interested in a list of TSs with most contributions to the error floor. Such TSs are often those whose $a$ and $b$ values are relatively small, and as pointed out, e.g., in~\cite{MR2292873}, they may also include TSs whose parameters are proportional to the code's block length. For such cases, our results imply that, unless P = NP,  there is no polynomial time algorithm that can find all the LETSs of interest.

McGregor and Milenkovic \cite{mcgregor2010hardness} studied the computational complexity of finding different categories of trapping sets. In particular, they showed that for a given $a$, finding an $(a,b)$ trapping set with minimum $b$ is an NP-hard problem. Also, given $b$, it was shown that finding an $(a,b)$ ETS with minimum $a$ is NP-hard~\cite{mcgregor2010hardness}. In fact, the results of~\cite{mcgregor2010hardness} indicate that, for any constant $\epsilon$, 
there is no polynomial-time $\epsilon$-approximation algorithm for any of the two problems, unless RP = NP or P = NP, respectively.
On the other hand, when a problem is NP-hard, the restricted cases of that problem can be NP-hard or polynomial time solvable. We thus cannot judge the computational complexity of finding LETSs based on that of finding ETSs. Moreover, to the best of our knowledge, there is no result on the worst-case computational complexity of finding LETSs. The fact that LETSs are the main problematic structures in the error floor region of LDPC codes over the AWGN channel~\cite{case, ZLT-TCOM-2010, karimi2012efficient, butler2014error, hashemi2015new}, however, makes such a problem worth investigating.

In this paper, we study the computational complexity of finding LETSs in variable-regular LDPC codes. In particular,
we consider the following two problems: $(i)$ for a given  $(d_v, d_c)$-regular Tanner graph $G$ and an integer $a$, find an $(a,b)$ LETS with minimum $b$ in $G$, and
$(ii)$ for a given $b$, find a LETS with minimum size $a$ in $G$. We prove that both problems are NP-hard, even to approximate within any approximation factor, no matter how fast the approximation factor scales with the size of the problem. This implies that there is no polynomial-time algorithm to even approximate the solution to any of these problems, unless P = NP. 

In addition, we study the computational complexity of finding {\em elementary absorbing sets (EABS)}, a subcategory of LETSs. Absorbing sets are the fixed points of bit-flipping decoding algorithms~\cite{XB-2007},~\cite{DZAWN-2010}, and are also shown to be relevant in the context of quantized decoders over the AWGN channel~\cite{DZAWN-2010}. Among absorbing sets, the elementary ones are known to be the most problematic. We note that EABSs and LETSs are identical sets for variable-regular LDPC codes with $d_v=3$.
For other values of $d_v$, EABSs are a subset of LETSs. For EABSs, we prove that both Problems $(i)$ and $(ii)$ are still NP-hard, even to approximate within any approximation factor.

The organization of the rest of the paper is as follows: In Section~\ref{sec2}, we present some preliminaries including some definitions and notations. This is followed in Sections~\ref{sec3} and \ref{sec4} by our results on the worst-case computational complexity of finding LETSs and EABSs, respectively.
The paper is concluded with some remarks in Section~\ref{sec5}.

\section{Preliminaries}
\label{sec2}

We say that a graph $G$ is {\it simple} if it has no loop or parallel edges. Throughout this paper, all graphs are simple.
For a graph $G$, we denote the node set and the edge set of $G$ by $V(G)$ and $E(G)$, or by $V$ and $E$ (if there is no ambiguity about the graph), respectively.
The number of edges connected to a node $v$ is called the {\em degree} of $v$, and is denoted by $d(v)$.
Also, the {\it maximum degree} and the {\it minimum degree} of a graph $G$, denoted by $\Delta(G)$ and $\delta(G)$, respectively, are defined to be the maximum and the minimum degree of its nodes, respectively.
A node $v$ is called {\it leaf} if $d(v) = 1$. A {\it leafless} graph is a  graph $G$ with $\delta(G) \geq  2$.
In this work, we  consider graphs to be connected (note that disconnected graphs can be considered as the union of connected ones).

A graph $G=(V,E)$ is called {\it bipartite} if the node set $V$ can be
partitioned into two disjoint subsets $U$ and $W$, i.e., $V = U \cup W \text{ and } U \cap W =\emptyset $, such that every edge in $E$ connects a node from $U$ to a node from $W$. Such a bipartite graph is denoted by $G=(U \cup W ,E)$.

Each $m \times n$ parity check matrix $H$ of a linear block code of block length $n$, in general, and an LDPC code, in particular, can be represented by a bipartite (Tanner) graph $G =  (U \cup W,E)$, where $U = \{u_1, u_2, \ldots , u_n\}$ is the set of {\em variable nodes} and $W = \{w_1, w_2, \ldots , w_m \}$ is the set of {\em check nodes}.
A Tanner graph is called {\it variable-regular} with variable degree $d_v$ if the degree of every variable node is $d_v$. Also, a $(d_v, d_c)$-regular Tanner graph is a variable-regular Tanner graph in which the degree of every check node is $ d_c$.
For an LDPC code (Tanner graph),  if $H$ is full-rank, the {\em rate} is given by $R = 1 - \bar{d_v} / \bar{d_c}$, where $\bar{d_v}$ and $\bar{d_c}$ are the average variable and check degrees, respectively. In this work, we consider LDPC codes (Tanner graphs) in which $\bar{d_v}$ is constant with respect to $n$. This implies that, given $R$, in the asymptotic regime where $n \rightarrow \infty$, the density of $H$ and that of the Tanner graph tends to zero. This is consistent with the term ``low-density'' in ``low-density parity-check.''

Let $G=  (U \cup W,E)$ be a bipartite graph. For a set $S$,   where $S\subseteq U$, the set $N(S)$,  where $N(S)\subseteq W$, denotes the set of neighbors of $S$ in $G$. Also, the induced subgraph $G(S)$ of $S$ in $G$, is defined as the graph with the set of nodes $S \cup N(S)$ and the set of edges $\{u_iw_j: u_iw_j \in E , u_i\in S , w_j \in N(S)\}$.
In the graph $G(S)$, the set of check nodes with odd and even degrees  are denoted by $N_o(S)$ and $N_e(S)$, respectively. Also, the terms {\it unsatisfied check nodes} and {\it satisfied check nodes} are used to refer to the check nodes in $N_o(S)$ and $N_e(S)$, respectively.

For a given Tanner graph $G=  (U \cup W,E)$, a set $S \subset U$ is called an {\it $(a,b)$ trapping set (TS)}
if $|S| = a$ and $|N_o(S)| = b$. An {\it elementary trapping set (ETS)} is a trapping set for which each check node  in $G(S)$ has degree either one or two. For a given set $S$,  $S \subset U$,  we say that $S$ is a {\it leafless ETS (LETS)} if $S$ is an ETS and if the  graph which is obtained from $G(S)$  by removing all the check nodes of
degree one and their incident edges is leafless.
An {\em absorbing set (ABS)} $S$ is a TS for which all the variable nodes in $S$
are connected to more nodes in $N_e(S)$ than in $N_o(S)$.
Also, an {\it elementary absorbing set (EABS)} $S$ is an ABS for which all the check nodes in $G(S)$ have degree either one or two.

In computational complexity theory, the class P problems are those that can be solved in polynomial time. On the other hand, the class NP (non-deterministic polynomial-time) problems contains all decision problems for which the instances where the answer is ``yes'' have proofs that are verifiable by deterministic computations that can be performed in polynomial time. More formally, NP is the set of decision problems solvable in polynomial time by a theoretical non-deterministic Turing machine. A problem ${\cal P}$ is {\it NP-hard} when every problem ${\cal L}$ in class NP can be reduced in polynomial time to ${\cal P}$, i.e., assuming a solution for ${\cal P}$ takes one unit time, we can use ${\cal P}$'s solution to solve ${\cal L}$ in polynomial time. The complexity class P is contained in NP, but NP contains many more problems. The hardest problems in NP,  whose solutions are sufficient to deal with any other NP problem in polynomial time, are called {\it NP-complete}. The most important open question in complexity theory is whether P = NP. It is widely believed that the answer is negative. Polynomial-time reductions are frequently used in complexity theory to prove NP-completeness, i.e., if Problem ${\cal P}$ can be reduced to Problem ${\cal P}'$ in polynomial time, then  ${\cal P}$ is no more difficult than ${\cal P}'$, because whenever an efficient algorithm exists for ${\cal P}'$, one exists for ${\cal P}$ as well. Thus, if ${\cal P}$ is NP-complete, so is ${\cal P}'$.

Although many optimization problems cannot be solved in polynomial time unless P = NP, in many of these problems the optimal solution can be efficiently approximated to a certain degree. In this context, approximation algorithms are polynomial time algorithms that find approximate solutions to NP-hard problems with provable guarantees on the distance of the returned solution to the optimal one. In majority of the cases, the guarantee of such algorithms is a multiplicative one expressed as an approximation ratio or approximation factor, i.e., the optimal solution is guaranteed to be within a multiplicative factor of the returned solution. Some optimization problems, however, are NP-hard even to approximate to within a given approximation factor. In this work, we show that all the problems of interest in this work belong to this category of problems.

To prove the NP-hardness of the problems related to finding LETSs and EABSs, we use some existing results in Boolean logic on the NP-completeness of satisfiability problems. In the following, we first present some definitions and notations in Boolean logic. This is followed by the description of the Boolean satisfiability problems of interest.
 
Consider a formula $\Phi=(X,C)$ in Boolean logic, where the two sets $X=\{x_1,\ldots, x_n\}$ and $C=\{c_1,\ldots, c_m\}$ are the sets of variables and clauses of $\Phi$, respectively. Each variable can take one of the two {\em truth values} ``True ($1$ or $T$)'' or ``False ($0$ or $F$).'' A {\em truth assignment} $\ell(x),  x \in X$, for $\Phi=(X,C)$ is an assignment of truth values to all the variables in $X$, with corresponding assignment of a truth value to $\Phi$. We use the notation $\bar{\ell}$ for a truth assignment whose truth values $\bar{\ell}(x)$ are $T$ or $F$ if and only if the corresponding truth values $\ell(x)$ of $\ell$ are $F$ or $T$, respectively. We say that a formula $\Phi$ is in {\it conjunctive normal form (CNF)} if it is a conjunction of clauses, where a clause is a disjunction of literals. A {\it literal} is either a variable $x$ or the negation $\neg x$ of a variable $x$. For example, $(x\vee \neg y \vee z) \wedge (x\vee  y \vee z) \wedge (x\vee  \neg y \vee \neg z) $ is a CNF formula with the set of variables $\{x,y,z\}$ and the set of clauses $\{(x\vee \neg y \vee z),(x\vee  y \vee z),( x\vee  \neg y \vee \neg z)\}$.
Throughout this paper, all the formulas are assumed to be in conjunctive normal form. In the following, we sometimes represent a CNF formula by the collection of its clauses. With a slight abuse of notation, and to simplify the equations,
we would also refer to a CNF formula $\Phi=(X,C)$ to mean the set of its clauses $C$.

For a given formula $\Phi$, we say that $\Phi$ has a {\em $\gamma $--IN--$ \beta$ truth assignment} if each clause in $\Phi$ has exactly $ \beta$ literals and there is a truth assignment for $\Phi$ such that each clause  has exactly
$\gamma$ true literals. For instance, for the formula $(x\vee  y \vee z) \wedge (\neg x\vee  \neg y \vee z) \wedge (\neg x\vee  y \vee \neg z) $, the assignment $\ell:
\ell(x)= T, \:\ell(y)=\ell(z)=F$, is a $1$--IN--$3$  truth assignment.
A {\em $\gamma$--IN--$\beta$ SAT problem} is the problem of determining whether there exists a $\gamma $--IN--$ \beta$ truth assignment for a given formula $\Phi$ (where in $\Phi$ every clause contains $\beta$ literals).
We say that a formula is {\it monotone} if there is no negation in the formula.

The following problems are of interest in proving our results:
\begin{enumerate}
\item {\em Monotone $1$--IN--$3$ SAT}: Given a monotone formula $\Phi=(X,C)$ such that every clause in $C$ contains  three variables, is there a $1$--IN--$3$ truth assignment for $\Phi$?

\item  {\em  Cubic Monotone $1$-IN-$3$ SAT}: Given a monotone formula $\Phi=(X,C)$ such that every clause in $C$ contains  three variables and every variable  appears in exactly three clauses,
is there a $1$--IN--$3$ truth assignment for $\Phi$?

\item  {\em   Monotone $2$--IN--$\beta$ SAT}: Given a monotone formula $\Upsilon=(X,C)$  such that every clause in $C$ contains $\beta$ variables, is there a $2$--IN--$\beta$ truth assignment for $\Upsilon$?

\item {\em  Cubic Monotone $2$-IN-$\beta$ SAT}: Given a monotone formula $\Psi=(X,C)$
 such that every clause in $C$ contains $\beta$ variables and every variable  appears in exactly three clauses,
 is there a $2$--IN--$\beta$ truth assignment for $\Psi$?

\item  {\em  $\alpha$--Monotone $2$--IN--$\beta$ SAT}: Given a monotone formula $\phi=(X,C)$
 such that every clause in $C$ contains $\beta$ variables and every variable
 appears in exactly $\alpha$ clauses, is there a $2$--IN--$\beta$ truth assignment for $\phi$?

\end{enumerate}

Problems $1$ and $2$ above are shown in \cite{schaefer1978complexity} and~\cite{MR1863810}, respectively, to be NP-complete. As an intermediate result to prove the NP-hardness of the problems under consideration, we start from Problem $1$ and demonstrate in three steps, through polynomial-time reductions, that Problems $3$--$5$ are all NP-complete as well.

The problems of direct interest in this work are the followings:

\begin{itemize}
\item {\em Min-$b$-LETS}: Given an $(\alpha,\beta)$-regular Tanner graph $G$ and a positive integer $a$,
find the minimum non-negative integer $b$ such that there is an $(a,b)$ LETS in $G$.

\item {\em Min-$a$-LETS}: Given a Tanner graph $G$ and a non-negative integer $b$, find the minimum positive integer $a$ such that there is an $(a,b)$ LETS in $G$.

\item {\em Min-$b$-EABS}: Given a $(\alpha,\beta)$-regular Tanner graph $G$ and a positive integer $a$, find the minimum non-negative integer $b$ such that there is an $(a,b)$ EABS in $G$.

\item {\em Min-$a$-EABS}: Given a Tanner graph $G$ and a non-negative integer $b$, find the minimum positive integer $a$ such that there is an $(a,b)$ EABS in $G$.

\end{itemize}

In the rest of the paper, we prove that all the above problems are NP-hard to approximate within any approximation factor, i.e., there is no polynomial-time algorithm to approximate the solution to any of these problems, unless P = NP.

\section{Computational Complexity of Finding LETSs}
\label{sec3}

\begin{theo}\label{TH1}
For any integers $\alpha$ and $\beta$ satisfying $3 \leq \alpha \leq \beta$, {\em Min-$b$-LETS} is NP-hard to approximate within any approximation factor.
\end{theo}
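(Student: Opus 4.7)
My plan is to reduce the NP-complete problem $\alpha$-Monotone $2$-IN-$\beta$ SAT (Problem~5 in the preliminaries) to a $0$-versus-at-least-$1$ gap version of Min-$b$-LETS on $(\alpha,\beta)$-regular Tanner graphs. Such a gap automatically upgrades NP-hardness of the underlying decision problem to inapproximability within any factor, because any $\rho$-approximation for a minimization problem must output $0$ when the optimum is $0$ (since $\rho\cdot 0 = 0$) and must output a value at least $1$ when the optimum is at least $1$ (since its output is a feasible non-negative integer). Hence establishing the gap rules out every $\rho(n)$-approximation, no matter how fast $\rho$ scales with the instance size.

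Before attacking the main reduction, I would first verify that $\alpha$-Monotone $2$-IN-$\beta$ SAT is NP-complete for every $3 \le \alpha \le \beta$, by chaining three polynomial-time reductions starting from the known NP-complete Monotone $1$-IN-$3$ SAT (Problem~1): Problem~1 $\to$ Problem~3 $\to$ Problem~4 $\to$ Problem~5. Each of these steps is a standard SAT-gadget argument that I would isolate as a separate lemma.

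For the reduction to Min-$b$-LETS, given $\phi=(X,C)$ with $|X|=n$ and $|C|=m$, I construct the Tanner graph $G$ whose variable nodes are the $x_i$'s, whose check nodes are the $c_j$'s, and in which $u_iw_j\in E$ iff $x_i$ appears in $c_j$. Since every variable lies in exactly $\alpha$ clauses and every clause contains exactly $\beta$ variables, $G$ is $(\alpha,\beta)$-regular. I set $a := 2m/\alpha$, after padding $\phi$ if necessary to guarantee $\alpha \mid 2m$ without disturbing the $\alpha$-regular, monotone, $\beta$-literal structure. The key biconditional is: $\phi$ admits a $2$-IN-$\beta$ truth assignment iff $G$ contains an $(a,0)$ LETS. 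The forward direction takes $S := \{u_i : \ell(x_i)=T\}$; a double count $\alpha|S| = \sum_{c\in C}(\text{true literals in } c) = 2m$ yields $|S| = a$, every check in $N(S)$ has exactly two neighbors in $S$ so $b=0$, and every variable in $S$ has degree $\alpha\ge 3$ in $G(S)$, making $G(S)$ leafless. The backward direction starts from an $(a,0)$ LETS $S$: all checks in $G(S)$ have degree exactly $2$ (because $b=0$ and the ETS property forces checks of degree $1$ or $2$), and the edge count $\alpha|S| = \alpha a = 2m = 2|N(S)|$ forces $|N(S)| = m$, i.e., \emph{every} clause of $\phi$ has exactly two variables in $S$; setting $\ell(x_i) = T \iff u_i \in S$ then produces a $2$-IN-$\beta$ assignment.

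The resulting gap is clean: YES instances map to graphs with optimal $b = 0$, NO instances to graphs with optimal $b \ge 1$, and the inapproximability conclusion follows from the preamble above. The main obstacle I anticipate is not the Tanner-graph step, which is essentially a short counting argument, but rather the third SAT reduction (Problem~4 $\to$ Problem~5), which must regularize the variable-appearance count from $3$ to an arbitrary $\alpha \ge 3$ while simultaneously preserving monotonicity, the $\beta$-literal clause structure, and the $2$-IN-$\beta$ satisfiability equivalence. A secondary, more technical, concern is verifying that the padding used to enforce $\alpha \mid 2m$ (and to respect any connectivity convention on LETSs) does not introduce spurious $(a,0)$ LETSs in the NO case.
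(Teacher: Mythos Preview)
Your proposal is correct and follows essentially the same approach as the paper: the identical four-step reduction chain Monotone $1$-IN-$3$ SAT $\to$ Monotone $2$-IN-$\beta$ SAT $\to$ Cubic Monotone $2$-IN-$\beta$ SAT $\to$ $\alpha$-Monotone $2$-IN-$\beta$ SAT $\to$ Min-$b$-LETS, with the same incidence Tanner graph and the same choice $a=2|W|/\alpha$. The only noteworthy difference is that the paper additionally proves the slightly stronger fact that \emph{every} $(a,b)$ LETS in $G$ with $a=2|W|/\alpha$ must have $b=0$ (so NO instances are actually infeasible rather than having optimum $\ge 1$), and it does not bother with a padding step for divisibility; neither point affects the correctness of your gap argument.
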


\begin{proof}{
To prove the result, we reduce {\em Monotone $1$--IN--$3$ SAT} to a decision problem corresponding to {\em Min-$b$-LETS} in four steps.

{\bf Step 1.} (Reduction of {\em Monotone $1$--IN--$3$ SAT} to {\em Monotone $2$--IN--$\beta$ SAT})
Let $\Phi$ be an instance of {\em Monotone $1$--IN--$3$ SAT}. In this step, we convert the formula $\Phi$ into a monotone formula $\Upsilon$ such that in $\Upsilon$ every clause contains $\beta$ variables, and there is a truth assignment for variables in $\Phi$ such that  each clause in $\Phi$ has exactly
one true literal if and only if there is a truth assignment for variables in $\Upsilon$ such that  each clause in $\Upsilon$ has exactly two true literals.

We consider three cases: $(1) \beta=3$, $(2) \beta=4$, and $(3) \beta \geq 5$.

{\em Case 1 ($\beta=3$)}: Assume that $\ell$ is a $1$--IN--$3$ truth assignment for $\Phi$.
It is easy to see that $\bar{\ell}$ is a $2$--IN--$3$ truth assignment for $\Phi$. Hence, $\Phi$ has a $1$--IN--$3$ truth  assignment if and only if it has a $2$--IN--$3$ truth assignment. We thus choose $\Upsilon = \Phi$.

{\em Case 2 ($\beta=4$)}: Consider the formula $\Phi=(X,C)$ and let $x'$ be a new variable such that $x' \notin X$. Replace each clause $c$ in  the formula $\Phi$ with
$c \vee x'$, and call the resultant formula $\Upsilon=(X \cup x', C')$. 
If $\Phi$ has a $1$--IN--$3$ truth assignment $\ell$, then we select a truth assignment $\ell'$ for $\Upsilon$ by defining $\ell'(x) = \ell(x), \: \forall x \in X$, and $\ell'(x')=T$. It is now easy to see that $\ell'$ is a $2$--IN--$\beta$ truth assignment for $\Upsilon$. On the other hand, assume that $\ell'$ is a $2$--IN--$\beta$ truth assignment for $\Upsilon$. If $\ell'(x')=T$, then $\ell'$ is also a $1$--IN--$3$ truth assignment for $\Phi$. Otherwise, if $\ell'(x')=F$, then $\ell'$
is a $2$--IN--$3$ truth  assignment for $\Phi$. In this case, $\bar{\ell'}$ will be a $1$--IN--$3$ truth  assignment for $\Phi$.

{\em Case 3 ($\beta \geq 5$)}: Consider the collection of following $\beta-1$ clauses and call them $\mathcal{S}$:
\\
$(x_1 \vee x_2 \vee \ldots \vee x_{\beta-1}\vee y_1)$,\\
$(x_1 \vee x_2 \vee \ldots \vee x_{\beta-1}\vee y_2)$,\\
$ \vdots $\\
$(x_1 \vee x_2 \vee \ldots \vee x_{\beta-1} \vee y_{\beta-2})$,\\
$(y_1 \vee y_2 \vee \ldots \vee y_{\beta-2} \vee x_1 \vee x_2)$.

Assume that $\ell$ is an arbitrary  $2$--IN--$\beta$ truth assignment for $\mathcal{S}$, and let $i$ and $j$ be two arbitrary integers such that $1 \leq i<j \leq \beta-2$. If we have $\ell(y_i)\neq \ell(y_j)$, then,
based on Assignment $\ell$, it is impossible that both clauses $(x_1 \vee x_2 \vee \ldots \vee x_{\beta-1}\vee y_i)$ and $(x_1 \vee x_2 \vee \ldots \vee x_{\beta-1}\vee y_j)$ have exactly two true literals.
Thus, we must have $\ell(y_1)=\ell(y_2)=\cdots=\ell(y_{\beta-2})$. Since, based on Assignment $\ell$, there must be exactly two true literals in the clause $(y_1 \vee y_2 \vee \ldots \vee y_{\beta-2} \vee x_1 \vee x_2)$, and since $\beta-2 \geq 3$, we conclude  $\ell(y_1)=\ell(y_2)=\cdots=\ell(y_{\beta-2})=F$, and thus $\ell(x_1)=\ell(x_2)=T$. Consequently, $\ell(x_3)=\ell(x_4)= \cdots = \ell(x_{\beta-1})=F$.

For each $i$, $1\leq i \leq \beta-3$, consider the collection of following clauses and call them $\mathcal{S}_i$:
\\
$(x_1^i\vee x_2^i\vee \ldots\vee x_{\beta-1}^i\vee y_1^i)$,\\
$(x_1^i\vee x_2^i\vee \ldots\vee x_{\beta-1}^i\vee y_2^i)$,\\
$ \vdots $\\
$(x_1^i\vee x_2^i\vee \ldots\vee x_{\beta-1}^i\vee y_{\beta-2}^i)$,\\
$(y_1^i\vee y_2^i\vee \ldots\vee y_{\beta-2}^i\vee x_1^i\vee x_2^i)$.
\\
Now, consider the collection of clauses $\mathcal{H}= C \cup \{\cup_{i=1}^{\beta-3} \mathcal{S}_i \}$, where $C$ is the set of clauses in $\Phi$. In $\mathcal{H}$,
replace each clause $c \in C$ (which has $3$ variables) with the clause $c \vee x_3^1 \vee x_3^2 \vee \ldots \vee x_3^{\beta-3}$.
Call the resultant formula $\Upsilon$. In $\Upsilon$, each clause has exactly $\beta$ variables.
For each $i$, $1\leq i \leq \beta-3$, if $\ell$ is a $2$--IN--$\beta$ truth assignment for $\mathcal{S}_i$, then $\ell(x_3^i)=F$.  Thus, the formula $\Phi$ has a $2$--IN--$3$ truth assignment if and only if
$\Upsilon$ has a $2$--IN--$\beta$ truth assignment. On the other hand, the formula $\Phi$ has a $2$--IN--$3$ truth  assignment if and only if it has a $1$--IN--$3$ truth assignment. This completes the proof for this case.

{\bf Step 2.} (Reduction of {\em Monotone $2$--IN--$\beta$ SAT} to {\em  Cubic Monotone $2$--IN--$\beta$ SAT})
 In this step, we convert $\Upsilon$ to a monotone formula $\Psi$ such that in $\Psi$ every clause contains $\beta$ variables, every variable
 appears in exactly $3$ clauses, and there is a truth assignment for $\Upsilon$ such that
 each clause in $\Upsilon$ has exactly two true literals if and only if there is a truth assignment for $\Psi$ such that
 each clause in $\Psi$ has exactly two true literals.

\begin{figure}[ht]
\begin{center}
\includegraphics[scale=.5]{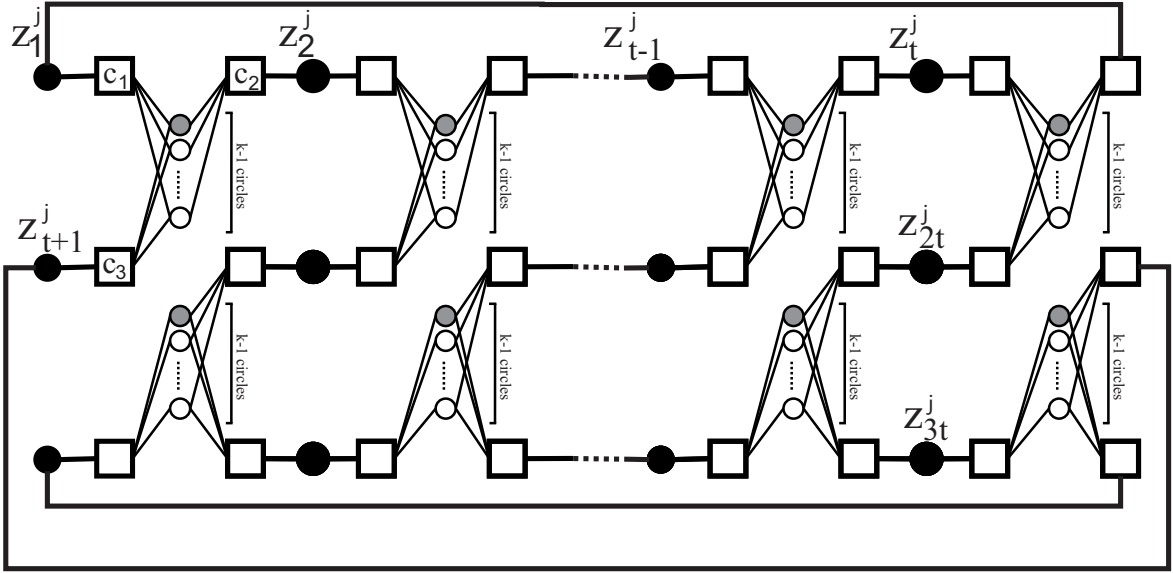}
\caption{Graph $F_{t,k,j}$ representing Formula $\Omega_{t,k,j}$.
} \label{graphA1}
\end{center}
\end{figure}

First, consider the graph $F_{t,k,j}$ shown in Figure \ref{graphA1}. This graph represents a formula in the following way: Every circle is a variable, every square is a clause and a clause $c$ contains a variable $v$ if and only if they are connected through an edge in the graph. Call the formula that is represented by $F_{t,k,j}$, $\Omega_{t,k,j}$. In $\Omega_{t,k,j}$, every variable occurs at most three times, each clause contains exactly $k \geq 3$ variables and there
is no negation. Parameter $j$ is an index that all the variables in $\Omega_{t,k,j}$  have as superscript. The graph $F_{t,k,j}$ contains $3t$ black variable nodes, each of degree two.

It is easy to see that there are some $2$--IN--$k$ truth assignments for $\Omega_{t,k,j}$. For instance, one can assign $T$ to black and grey variable nodes and $F$ to white ones.
Another example is to assign $F$ to black variable nodes, $T$ to grey and one of the white variable nodes in each group of $k-2$ white nodes, and $F$ to the rest of white nodes. In the following, we prove that in any $2$-IN-$k$ truth assignment for $\Omega_{t,k,j}$, the values of all black variable nodes must be equal, i.e., either they are all assigned ``$T$,'' or they are all equal to ``$F$'' ({\bf Fact 1}).
In $F_{t,k,j}$, consider the three black nodes $z_1^j$, $z_2^j$ and $z_{t+1}^j$. Clause $c_1$ contains $z_1^j$, Clause $c_2$ contains $z_2^j$ and Clause $c_3$ contains $z_{t+1}^j$.
Also, there are $k-1$ variables, which we call them $v_1^j,v_2^j, \ldots , v_{k-1}^j$, such that $c_1=(z_1^j  \vee v_1^j \vee \ldots \vee v_{k-1}^j)$, $c_2=(z_2^j  \vee v_1^j \vee \ldots \vee v_{k-1}^j)$ and $c_3=(z_{t+1}^j \vee v_1^j \vee \ldots \vee v_{k-1}^j)$.
Assume that $\ell$ is a  $2$--IN--$k$ truth assignment for $\Omega_{t,k,j}$. Then, we must have $\ell(z_1^j)=\ell(z_2^j)=\ell(z_{t+1}^j)$. Otherwise, it is impossible to have exactly two true literals in all three clauses $c_1$, $c_2$ and $c_3$.
Using a similar argument, one can see that in any $2$-IN-$k$ truth assignment, the values of all the black variable nodes must be equal

Now, consider the formula $\Upsilon'=(X,C')$, which has the same set of variables as $\Upsilon=(X,C)$, but each clause in $C$ is copied nine times to form $C'$.
Each variable $x_j$ in $\Upsilon'$, therefore, appears in $3h(x_j)$ clauses, where $h(x_j)\geq 3$. Now, for each variable $x_j$ in $\Upsilon'=(X,C')$,
consider a copy of the formula  $\Omega_{h(x_j),\beta,j}$. In this formula (corresponding to the graph $F_{h(x_j),\beta,j}$), let $\{z_1^j,z_2^j, \ldots, z_{3h(x_j)}^j\}$ be the set of black variable nodes.
Replace each appearance of $x_j$  in $\Upsilon'$ with one of the variables $z_1^j,z_2^j, \ldots, z_{3h(x_j)}^j$, and call the resultant formula $\Upsilon''$.
Now, consider the set of clauses $\{\bigcup_{x_j\in X} \Omega_{h(x_j),\beta,j} \cup \Upsilon'' \}$, and call it $\Psi$. In $\Psi$, every clause contains $\beta$ variables and  every variable
appears in exactly three clauses. By Fact 1, it is straightforward to verify that Formula $\Psi$ has a $2$--IN--$\beta$ truth assignment if and only if $\Upsilon$ has a $2$--IN--$\beta$ truth assignment.

{\bf Step 3.} (Reduction of {\em  Cubic Monotone $2$--IN--$\beta$ SAT} to {\em $\alpha$--Monotone $2$--IN--$\beta$ SAT})
In this step, we convert $\Psi$ to a monotone formula $\phi$ such that in $\phi$ every clause contains $\beta$ variables, every variable appears in $\alpha$ clauses, and $\Psi$ has a $2$--IN--$\beta$ truth assignment if and only if $\phi$ has a $2$--IN--$\beta$ truth assignment.
The proof for $\alpha = 3$ is trivial. In the following, we thus consider $\alpha > 3$.

For each variable $x$ in $\Psi=(X,C)$ and  each $i$, $1\leq i \leq \alpha$, consider the following clauses and call them
$\mathcal{D}_{i,x}$:
\\
$(x_i\vee y^1_{1,x} \vee y^1_{2,x}\ldots \vee y^1_{\beta-1,x})$,\\
$(x_i\vee y^2_{1,x} \vee y^2_{2,x}\ldots \vee y^2_{\beta-1,x})$,\\
$ \vdots $\\
$(x_i\vee y^{\alpha-3}_{1,x} \vee y^{\alpha-3}_{2,x}\ldots \vee y^{\alpha-3}_{\beta-1,x})$.

Let $T=\bigcup_{i,x} \mathcal{D}_{i,x}$. If $\Psi$ has $\gamma$ variables, then $T$ has $\alpha (\alpha-3)\gamma$ clauses. Also, note that for each $x\in X$ and each $1\leq i \leq \alpha$, the variable $x_i$ appears $\alpha-3$ times in $T$.

For each variable $x \in X$, the set $T$ contains the following clauses:
\\
$(x_1\vee y^1_{1,x} \vee y^1_{2,x}\ldots \vee y^1_{\beta-1,x})$,\\
$(x_2\vee y^1_{1,x} \vee y^1_{2,x}\ldots \vee y^1_{\beta-1,x})$,\\
$ \vdots $\\
$(x_{\alpha}\vee y^1_{1,x} \vee y^1_{2,x}\ldots \vee y^1_{\beta-1,x})$.\\
Therefore, if $\ell$ is a $2$-IN-$\beta$ truth assignment for $T$, then $\ell(x_1)=\ell(x_2)=\cdots=\ell(x_{\alpha})$ ({\bf Fact 2}).

Consider $\alpha$ copies of $\Psi=(X,C)$, and for each $x\in X$ and $i$, $ 1\leq i \leq \alpha$, in copy $i$, replace all the $x$ variables with $x_i$.
Call the resultant formula $\Psi'$. Next, consider the union of $\Psi'$ and the set of clauses $T$, and call it $\phi$. In $T$, $x_i$ appears $\alpha -3$ times and in $\Psi'$, it appears 3 times. Thus, in total (in $\phi$),  Variable $x_i$
appears $\alpha$ times.

Now, we show that $\Psi$ has a $2$--IN--$\beta$ truth assignment if and only if $\phi$ has a $2$--IN--$\beta$ truth assignment. Let $\ell$ be a $2$--IN--$\beta$ truth assignment for $\phi$.
Define the assignment $f$ for $\Psi$ such that $f(x)=T$ or $F$ if $\ell(x_1)=T$ or $F$, respectively. It is easy to see that $f$ is a $2$--IN--$\beta$ truth assignment for $\Psi$.
Now, assume that $f$ is $2$--IN--$\beta$ truth assignment for $\Psi$. Define the assignment $\ell$ for $\phi$ as following:

\begin{center}
$\ell(v)=
\begin{cases}
   f(x),       &\text{if  }\,\,v=x_i,\, x\in X,\, 1\leq i \leq \alpha\\
   T,       &\text{if  }\,\,v=y^j_{1,x},\, 1\leq j \leq \alpha-3,\, x\in X\\
   T,       &\text{if  }\,\,v=y^j_{2,x},\, f(x)=F,\, 1\leq j \leq \alpha-3,\,  x\in X\\
   F,      &\text{if  }\,\,v=y^j_{2,x},\, f(x)=T,\, 1\leq j \leq \alpha-3,\,  x\in X\\
   F,      &\text{Otherwise}.\
\end{cases}$
\end{center}

By Fact 2, it is easy to check that $ \ell$ is   a $2$--IN--$\beta$ truth assignment for $\phi$.

{\bf Step 4.} (Reduction of  {\em $\alpha$--Monotone $2$--IN--$\beta$ SAT} to {\em Min-$b$-LETS})
Let $\phi$ be an instance of {\em $\alpha$--Monotone $2$--IN--$\beta$ SAT} Problem. We first construct a bipartite graph $G =  (U \cup W,E)$ from $\phi$: For each variable $x$ and each clause $c$ in $\phi$, we create a variable node $x$ in $U$, and a check node $c$ in $W$, respectively. Next, if Variable $x$ appears in Clause $c$ of $\phi$, we connect the variable node $x$ to the check node $c$ in $G$. Clearly, the resultant graph is an $(\alpha,\beta)$-regular Tanner graph. We use $G$ as an input instance of {\em Min-$b$-LETS} Problem and select $a=\dfrac{2|W|}{\alpha}$ as the input size of the LETS.
Now, in what follows, we first show that if there is a $(\dfrac{2|W|}{\alpha},b)$ LETS in $G$, then we must have $b=0$. Next, we show that
there is a $(\dfrac{2|W|}{\alpha},0)$ LETS in $G$ if and only if $\phi$ has a $2$--IN--$\beta$ truth assignment. (We note that the existence of any polynomial-time algorithm that provides an approximate solution to
{\em Min-$b$-LETS} within any approximation factor can be used to solve the problem exactly for the input $a = \dfrac{2|W|}{\alpha}$, and thus, unless P = NP, such an algorithm does not exist.)

In the following, we show that if $G$ has a LETS $S$ of size $a=\dfrac{2|W|}{\alpha}$, then in the induced subgraph $G(S)$ of $S$ in $G$, we have $|N_e(S)|=|W|$ and $|N_o(S)|=b=0$ ({\bf Fact 3}).
Let $S $ be a LETS of size $a=\dfrac{2|W|}{\alpha}$ in $G$.
If we count the number of edges in $G(S)$ from the $U$ side of the graph, we  have
\begin{equation}\label{D01}
 \displaystyle|E(G(S))|= a\times  d_v= \dfrac{2|W|}{\alpha} \times \alpha= 2|W|\:.
\end{equation}
On the other hand, by counting the number of edges in $G(S)$ from the $W$ side, we have
\begin{equation}\label{E01}
\displaystyle|E(G(S))| \leq 2|N(S)| \leq 2W\:,
\end{equation}
where the first inequality is due to the fact that in a LETS, check degrees are either one or two. From (\ref{D01}) and (\ref{E01}), we obtain $|N(S)| = |N_e(S)|=|W|$, and $|N_o(S)|=0$.

Now,  we prove the ``only if'' part of the main claim. Assume that $G$ has a $(\dfrac{2|W|}{\alpha},0) $ LETS $S$.
Define the  truth assignment $\ell:X\rightarrow \{T, F\}$, such that $\ell(x)=T$ if $x\in S$, and $\ell(x)=F$, otherwise. By Fact 3, every check node in $G$ is connected to exactly two variable nodes in $S$ and thus every clause in $\phi$ has exactly two true literals. Assignment $\ell$ is thus a  $2$--IN--$\beta$ truth assignment for $\phi$.

For the ``if part'' of the claim, assume that $\phi$ has a $2$--IN--$\beta$ truth assignment $\ell$. Let $S $ be a  subset of $U$ containing all the variables nodes for which $\ell(x)=T$. Since $\ell$ is a $2$--IN--$\beta$ truth assignment for $\phi$,
we have $N(S) = N_e(S) = W$, and each check node in $N_e(S)$ has degree two. This means $S$ is a LETS with $b=0$.
Now, by counting the number of edges form the two sides of $G(S)$, we have $a=\dfrac{2|W|}{\alpha}$.
Hence, $G(S)$ is a $(\dfrac{2|W|}{\alpha},0)$ LETS in $G$.
}\end{proof}

\begin{theo}
{\em Min-$a$-LETS} is NP-hard to approximate within any approximation factor.
\end{theo}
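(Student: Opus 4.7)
The plan is to follow the four-step reduction strategy of Theorem~1, reusing Steps 1--3 to reduce Monotone $1$--IN--$3$ SAT to $\alpha$-Monotone $2$--IN--$\beta$ SAT on the formula side, and then designing a new Step 4 appropriate for Min-$a$-LETS. Given an instance $\phi$ of $\alpha$-Monotone $2$--IN--$\beta$ SAT, I would build the same $(\alpha,\beta)$-regular Tanner graph $G$ as in Theorem~1 and feed the pair $(G, b=0)$ to Min-$a$-LETS.

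The satisfiable direction is immediate and reuses Fact~3 of Theorem~1: a $2$--IN--$\beta$ truth assignment of $\phi$ yields an $(a^*, 0)$ LETS of $G$ with $a^* = 2|W|/\alpha$, by the same edge-counting argument. Hence, if $\phi$ is satisfiable the optimum of Min-$a$-LETS on $(G, 0)$ is at most $a^*$.

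The delicate direction is to show that if $\phi$ is unsatisfiable then $G$ contains no $(a, 0)$ LETS for any $a \geq 1$. A nonempty $(a, 0)$ LETS of $G$ is precisely a nonempty ``$0$-or-$2$ partial assignment'' of $\phi$, i.e., a set $S$ of variables with $|c \cap S| \in \{0, 2\}$ for every clause $c$. The straightforward construction does not prevent spurious partial assignments supported inside the auxiliary gadgets introduced in Steps 1--3 (for instance, activating just a pair of the auxiliary variables $x_3^i$ of Step~1 yields a $0$-or-$2$ partial assignment of constant size, regardless of the satisfiability of the original $\Phi$). To rule these out, I would enlarge the gadgets of Steps~1 and~3 by adding further clauses modeled on $\Omega_{t,k,j}$ and $\mathcal{D}_{i,x}$, but cross-wiring the auxiliary variables so that the only $0$-or-$2$ partial assignments consistent with the enlarged gadgets are the empty one and those coming from a full $2$--IN--$\beta$ satisfying assignment of the original $\phi$. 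The analysis is again a case distinction on the number of True literals in shared clauses, in the spirit of Facts~1 and~2 of Theorem~1, but sharpened so that every nontrivial partial solution is forced to extend to a full one.

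The main obstacle is exactly the design and verification of these strengthened gadgets: we must eliminate every proper nonempty partial $0$-or-$2$ assignment of the intermediate formulas, preserve the correspondence between full $2$--IN--$\beta$ satisfying assignments of $\phi$ and $(a^*, 0)$ LETSs of $G$, and keep the construction polynomial in size so that NP-hardness is preserved. Once the strengthened reduction is in place, the input $(G, 0)$ to Min-$a$-LETS has the clean dichotomy ``optimum equals $2|W|/\alpha$'' versus ``no LETS at all,'' so any polynomial-time algorithm achieving any approximation factor would decide $\alpha$-Monotone $2$--IN--$\beta$ SAT in polynomial time. Since that problem is NP-complete by Steps 1--3 of Theorem~1, we conclude that Min-$a$-LETS is NP-hard to approximate within any approximation factor, unless $\mathrm{P} = \mathrm{NP}$.
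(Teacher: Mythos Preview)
Your proposal has a genuine gap. You correctly diagnose that feeding $(G,b=0)$ into Min-$a$-LETS with the Tanner graph of Theorem~\ref{TH1} fails because the auxiliary variables introduced in Steps~1--3 support small $(a,0)$ LETSs independent of the satisfiability of $\Phi$. But your remedy---``enlarge the gadgets \ldots\ cross-wiring the auxiliary variables so that the only $0$-or-$2$ partial assignments consistent with the enlarged gadgets are the empty one and those coming from a full $2$--IN--$\beta$ satisfying assignment''---is stated as a plan, not carried out. You yourself write that ``the main obstacle is exactly the design and verification of these strengthened gadgets,'' and that obstacle is the entire content of the proof. Forcing every nonempty $0$-or-$2$ partial assignment of a monotone formula to extend to a full satisfying assignment, while keeping each variable in exactly $\alpha$ clauses of size exactly $\beta$, is not obviously achievable; nothing in Theorem~\ref{TH1}'s Facts~1 and~2 gives a mechanism for it, and without an explicit construction and verification the reduction is incomplete.

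The paper bypasses this difficulty altogether by a different reduction. It starts directly from \emph{Cubic Monotone $1$--IN--$3$ SAT}, and for each clause $c$ attaches a gadget $\mathcal{D}_{c,2\eta+1}$ containing a bundle of $2\eta+1$ parallel check nodes with identical neighborhoods. It then inputs the \emph{large} value $b=\eta(2\eta+1)$ rather than $b=0$. The point is that $b$ is chosen so large relative to the number of non-bundle check nodes that any LETS with exactly this $b$ must place all $\eta$ bundles in $N_o(S)$ (Facts~4--6), and this forces the remaining structure of $S$ uniquely up to the choice of one neighbor of each $w_c^i$, yielding $a=\eta+2\eta/3$ (Fact~10). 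Thus the dichotomy ``optimum equals a fixed value'' versus ``no LETS with this $b$'' is obtained by a direct forcing argument on a concrete gadget, with nothing left to design.
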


\begin{proof}{
To prove the result, we reduce {\em Cubic Monotone $1$--IN--$3$ SAT} Problem to a decision problem corresponding to {\em Min-$a$-LETS} Problem.

Let $\Phi=(X,C)$ be an instance of {\em Cubic Monotone $1$--IN--$3$ SAT} Problem.
It is easy to see that $|X|=|C|$. Throughout the proof, we use $\eta$ to denote $|X|$. First, we construct a Tanner graph $G =  (U \cup W,E)$ from the formula $\Phi$:
For each variable $x \in X$, we create two variable nodes $x_1$ and $x_2$ in $U$. For each clause $c \in C$, we create a copy of the gadget $\mathcal{D}_{c,2\eta+1}$ shown in Fig. \ref{graphA2}. (In Fig. \ref{graphA2},
circles and squares represent variable and check nodes, respectively.)
Next, for each clause $c \in C$, if variable $x$ appears in $c$, then we connect the variable nodes $x_1$ and $x_2$ to the check nodes $w_c^1$ and $w_c^2$ of the gadget $\mathcal{D}_{c,2\eta+1}$, respectively.
(See Fig. \ref{graphA4} for an example of the Tanner graph $G$ constructed from a cubic monotone formula with three variables in each clause.)

\begin{figure}[ht]
\begin{center}
\includegraphics[scale=.7]{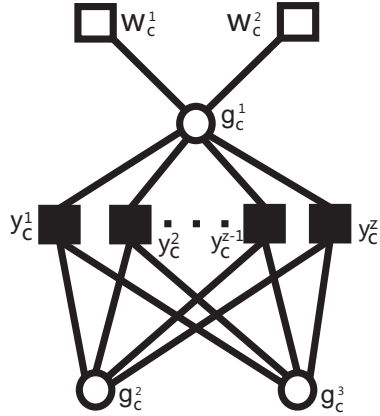}
\caption{Gadget $\mathcal{D}_{c,z}$.
} \label{graphA2}
\end{center}
\end{figure}

Note that Tanner graph $G$
has $5\eta$ variable nodes and $\eta (2\eta+3)$ check nodes. We use $G$ as the input instance of {\em Min-$a$-LETS} Problem and select $b = \eta (2\eta+1)$ as the input value of $b$.
Now, in the following, we first show that if $G$ has an $(a,\eta (2\eta+1))$ LETS, then, we must have $a=\eta‎+ ‎2\eta/3$ (Fact 10). Next, we prove that 
$G$ has a $(\eta‎+ ‎2\eta/3,\eta (2\eta+1))$ LETS if and only if $\Phi$ has a $1$--IN--$3$ truth assignment. (Any polynomial-time algorithm that can solve {\em Min-$a$-LETS} approximately within any approximation factor, 
can also provide an exact solution to the problem for the input $b=\eta (2\eta+1)$. Such an algorithm thus cannot exist, unless P = NP.) 

The followings are some properties of LETSs in $G$:

\begin{figure}[ht]
\begin{center}
\includegraphics[scale=.5]{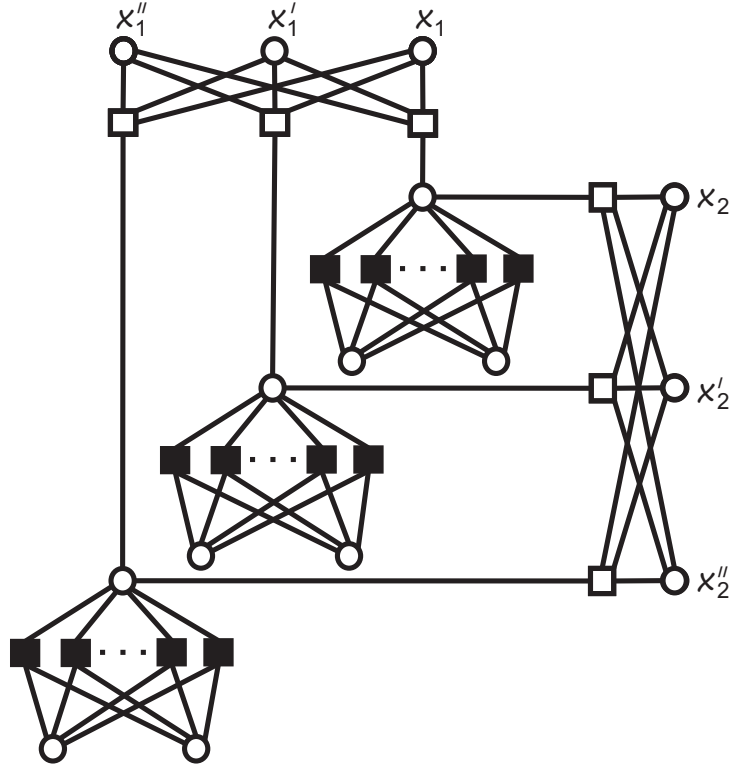}
\caption{Tanner graph $G$ corresponding to the formula
$(x\vee x' \vee x'') \wedge (x\vee x' \vee x'') \wedge (x\vee x' \vee x'') $.
} \label{graphA4}
\end{center}
\end{figure}

\begin{itemize}
\item {{\bf Fact 4}} Consider a LETS $S$ in $G$. Then, for each $c\in C$, since the neighbors of the nodes $y^1_c, y^2_c, \ldots, y^{2\eta+1}_c$ are the same,  we have $\{y^1_c, y^2_c, \ldots, y^{2\eta+1}_c\}\subset N_o(S)$ or $\{y^1_c, y^2_c, \ldots, y^{2\eta+1}_c\}\subset N_e(S)$ or none of the nodes $y^1_c, y^2_c, \ldots, y^{2\eta+1}_c$ is in $N(S)$.
\item {{\bf Fact 5}} For a LETS $S$ in $G$, if $\{y^1_c, y^2_c, \ldots, y^{2\eta+1}_c\}\subset N_o(S)$, then $ g^1_c \in S$, $ g^2_c\notin S$ and $ g^3_c\notin S$.
\item {{\bf Fact 6}}  Consider a LETS $S$ in $G$ with $b = \eta (2\eta+1)=2\eta^2+\eta$. Then, for every $c\in C$, $\{y^1_c, y^2_c, \ldots, y^{2\eta+1}_c\}\subset N_o(S)$. (Otherwise,  $b$ is at most $\eta (2\eta+3)-(2\eta+1)=2\eta^2+\eta-1$, which is a contradiction.)
\item {{\bf Fact 7}}  For a LETS $S$ in $G$ with $b=\eta (2\eta+1)$, for each $c\in C$, we have $ g^1_c\in S$,  $ g^2_c\notin S$ and $ g^3_c\notin S$. (Follows from Facts 5 and 6.)
\item {{\bf Fact 8}} For a LETS $S$ in $G$ with $b=\eta (2\eta+1)$, by Facts 6 and 7, for each $c\in C$, $w_c^1 \in N_e(S)$ and $w_c^2\in N_e(S)$. Thus, for each $c\in C$,
the check node $w_c^1$ has exactly one neighbor in the set $\{x_1: x\in X\}$.
\item {{\bf Fact 9}} Similar to Fact 8, for a LETS $S$ in $G$ with $b=\eta (2\eta+1)$, for each $c\in C$, the check node $w_c^2$ has exactly one neighbor in the set $\{x_2: x\in X\}$.
\item {{\bf Fact 10}}  For an $(a,b)$ LETS $S$ in $G$ with $b=\eta (2\eta+1)$, we must have $a=\eta+ 2\eta/3$. (This follows from Facts 7--9 and that the degree of each variable node in $\{x_1,x_2: x\in X\}$ is three.)
\end{itemize}

Now, we prove that $G$ has an $(\eta+ 2\eta/3,\eta (2\eta+1)) $ LETS  if and only if $\Phi$ has a $1$--IN--$3$ truth assignment.
Assume that $G$ has an $(\eta+ 2\eta/3,\eta (2\eta+1)) $ LETS $S$. Define the  truth assignment $\ell: X \rightarrow \{T,F\}$, such that for each $x\in X$, $\ell(x)=T$, if $x_1\in S$, and $\ell(x)=F$, otherwise.
By Fact 8, $\ell$ is a $1$--IN--$3$ truth assignment for $\Phi$.
Next, assume that the formula $\Phi$ has a $1$--IN--$3$ truth assignment $\ell$. Consider variables $x\in X$ with truth value $T$, and include their corresponding $x_1$ and $x_2$ variable nodes in $S$. Also, for each $c \in C$,  include $g^1_c$ in $S$. It is then easy to see that $S$ is an $(\eta+ 2\eta/3,\eta (2\eta+1)) $ LETS in $G$.
}\end{proof}

\section{Computational Complexity of Finding EABSs}
\label{sec4}

\begin{theo}\label{TH3}
For any integers $\alpha$ and $\beta$ satisfying $3 \leq \alpha \leq \beta$, {\em Min-$b$-EABS} is NP-hard to approximate within any approximation factor.
\end{theo}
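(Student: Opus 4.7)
The plan is to reduce \emph{Monotone $1$--IN--$3$ SAT} to the decision version of \emph{Min-$b$-EABS} by re-running the four-step chain from the proof of Theorem~\ref{TH1} essentially verbatim. The guiding observation is that the specific $(2|W|/\alpha,\,0)$ LETS produced by the Step~4 gadget is automatically an EABS, so for the same input graph $G$ and the same target size $a=2|W|/\alpha$, the existence of an EABS coincides with the satisfiability of the underlying $\alpha$--Monotone $2$--IN--$\beta$ SAT instance. This lets me bypass designing a new gadget.

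Steps~1--3 of the proof of Theorem~\ref{TH1} will be reused without change to obtain an instance $\phi$ of \emph{$\alpha$--Monotone $2$--IN--$\beta$ SAT}, and the Step~4 construction will again produce an $(\alpha,\beta)$-regular Tanner graph $G=(U\cup W,E)$ together with the target size $a=2|W|/\alpha$. For the ``only if'' direction, suppose $G$ contains an $(a,b)$ EABS $S$ with $a=2|W|/\alpha$. Since every EABS is an LETS, the edge-counting argument of Fact~3 in the proof of Theorem~\ref{TH1} applies verbatim and forces $b=0$ and $N(S)=N_e(S)=W$; then setting $\ell(x)=T$ iff $x\in S$ yields a $2$--IN--$\beta$ truth assignment for $\phi$. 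For the ``if'' direction, given such an assignment $\ell$, let $S$ be the set of variable nodes mapped to $T$. The same reasoning as in Step~4 makes $S$ an $(a,0)$ LETS, and promoting it to an EABS is immediate: $|N_o(S)|=0$ forces $|N_e(v)|=\alpha\geq 3>0=|N_o(v)|$ for every $v\in S$, so the absorbing-set condition holds trivially, and the check-node degrees are exactly two by construction.

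Inapproximability then follows by the same device as in Theorem~\ref{TH1}: any polynomial-time $\rho$-approximation algorithm for \emph{Min-$b$-EABS}, evaluated on the specific instance $(G,\,a=2|W|/\alpha)$, must output $0$ precisely when the true optimum is $0$ (since $\rho\cdot 0=0$), and since no EABS of size $a=2|W|/\alpha$ with $b>0$ exists in $G$, such an algorithm would decide whether $\phi$ is satisfiable, contradicting P$\neq$NP. The only obstacle I foresee is purely bookkeeping: I must verify that none of the inequalities used in the Theorem~\ref{TH1} edge-counting step relies on the absence of the absorbing-set condition. This is immediate, because that argument uses only variable-regularity on the $U$-side and the elementary property that check-degrees are at most two on the $W$-side, both of which are preserved under the EABS specialization.
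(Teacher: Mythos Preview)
Your proposal is correct and follows essentially the same approach as the paper. The paper likewise recycles the Step~4 construction of Theorem~\ref{TH1} verbatim, then observes that in a variable-regular graph the $(2|W|/\alpha,0)$ LETSs coincide with the $(2|W|/\alpha,0)$ EABSs, and that the edge-counting argument forcing $b=0$ depends only on the elementary property (check degrees $\leq 2$), which EABSs share; your use of the inclusion EABS $\subseteq$ LETS to import Fact~3 is a slightly different but equally valid packaging of the same observation.
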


\begin{proof}{
Similar to Step 4 of the proof of Theorem~\ref{TH1}, we reduce {\em $\alpha$--Monotone $2$--IN--$\beta$ SAT} Problem to {\em Min-$b$-EABS} Problem.
The construction of the Tanner graph $G$ is identical. In Step 4 of the proof of Theorem \ref{TH1}, we proved  that if there is a $(\dfrac{2|W|}{\alpha},b)$ LETS in $G$, then $b=0$, and also that there is a $(\dfrac{2|W|}{\alpha},0)$ LETS in $G$ if and only if the formula $\phi$ has a $2$--IN--$\beta$ truth assignment. Similarly, it can be proved that if there is a $(\dfrac{2|W|}{\alpha},b)$ EABS in $G$, then $b=0$.
Moreover, considering that in a variable-regular Tanner graph, LETSs with $b=0$ are identical to EABSs with $b=0$, we conclude that the necessary and sufficient condition for
$G$ to have a  $(\dfrac{2|W|}{\alpha},0)$ EABS is that the formula $\phi$ has a $2$--IN--$\beta$ truth assignment.
}\end{proof}

\begin{theo}\label{TH4}
{\em Min-$a$-EABS} is NP-hard to approximate within any approximation factor.
\end{theo}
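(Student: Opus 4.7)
The plan is to follow the blueprint of Theorem 2, reducing \emph{Cubic Monotone $1$-IN-$3$ SAT} to the decision version of \emph{Min-$a$-EABS}. Given a cubic monotone formula $\Phi$ with $\eta$ variables, I would build a Tanner graph $G$ using the same pair $x_1,x_2$ of variable nodes per SAT variable together with one clause gadget per clause, and choose an input value of $b$ so that any EABS attaining it is forced into a canonical shape whose size encodes the number of true literals.

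Directly reusing the gadget $\mathcal{D}_{c,z}$ of Theorem 2 is not possible: under $b=\eta(2\eta+1)$, the LETS produced there places $g^1_c$ in $S$ with $2\eta+1$ neighbors in $N_o(S)$ against only two in $N_e(S)$, so $g^1_c$ blatantly fails the EABS strict-majority condition. I would therefore redesign the clause gadget so that the ``witness'' unsatisfied checks are attached to auxiliary variable nodes of odd degree (say degree $3$) which themselves have exactly one odd and two even neighbors. These auxiliaries must be the unique source of the many required unsatisfied checks and must be coupled to $g^1_c$ through shared degree-$2$ checks, so that removing $g^1_c$ from $S$ would flip those shared checks to odd and inflate $b$ far above the target value. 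The vertex $g^1_c$ itself is then supplied with enough additional even-degree check partners, through further forced-in helpers, so that its own ratio of even to odd neighbors strictly exceeds $1/2$.

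With the redesigned gadget in place I would re-derive analogues of Facts 4-10: at the chosen $b$, any EABS of $G$ must contain $g^1_c$ together with the full set of auxiliary helpers for every clause, and the $(x_1,x_2)$-pairs corresponding to a $1$-IN-$3$ truth assignment for $\Phi$, and no other configuration works. The size $a$ then equals $\eta + 2\eta/3$ plus a constant that depends only on the gadget, and as in Theorem 2 the reduction is exact at the forced $b$-value, so any polynomial-time approximation algorithm for \emph{Min-$a$-EABS}, within any factor, could be used to decide $1$-IN-$3$ satisfiability in polynomial time, forcing P = NP.

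The main obstacle is the gadget design itself. The strict-majority condition at every selected variable leaves very little slack, and the new helpers must simultaneously admit essentially one valid EABS extension (so the combinatorial identity ``number of selected $x$-pairs equals number of true literals in $\Phi$'' is preserved) and be rigid enough that no smaller spurious EABS can arise by mixing fragments from several clause gadgets or by dropping a helper in favor of sharing across gadgets. Ruling out such spurious smaller solutions will likely require an edge-counting argument analogous to equations~(\ref{D01})-(\ref{E01}) in Step 4 of Theorem~\ref{TH1}, combined with a careful case analysis showing that every locally valid deviation from the canonical EABS strictly increases either $a$ or $b$.
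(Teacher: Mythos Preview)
Your high-level framework is exactly that of the paper: reduce \emph{Cubic Monotone $1$-IN-$3$ SAT}, use paired variable nodes $x_1,x_2$ per SAT variable plus a per-clause gadget, choose $b=\eta(2\eta+1)$, and argue that any EABS achieving this $b$ has a forced shape encoding a $1$-IN-$3$ assignment. You also correctly diagnose why $\mathcal{D}_{c,z}$ fails: $g^1_c$ would have $2\eta+1$ odd neighbors against two even ones, violating the absorbing majority condition.

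Where your proposal remains a genuine gap is the gadget itself. You describe a direction---off-load the odd checks onto degree-$3$ auxiliary nodes coupled to $g^1_c$ through shared even checks---but you never exhibit a concrete construction, and you explicitly flag this as the unresolved obstacle. That is precisely the heart of the theorem, so the proposal as written is not a proof. Moreover, your guess that the resulting size is $\eta+2\eta/3$ plus a gadget-dependent constant is off: with any per-clause fix the gadget contribution scales with $\eta$, not with a constant.

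The paper's fix is considerably simpler than the auxiliary-node scheme you sketch. Rather than relocating the odd checks, it \emph{doubles} the gadget symmetrically: the new gadget $\mathcal{F}_{c,2\eta+1}$ keeps the $y$-bank $y^1_c,\ldots,y^{2\eta+1}_c$ (with companions $g^2_c,g^3_c$) and adds a parallel $s$-bank $s^1_c,\ldots,s^{2\eta+1}_c$ with two further companions $g^4_c,g^5_c$, all attached to $g^1_c$. The absorbing condition at $g^1_c$ then \emph{itself} forces exactly one of the two banks to be odd and the other even (Fact~13), so $g^1_c$ automatically sees $2\eta+1$ odd against $2\eta+3$ even neighbors. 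Facts~14--18 then play out just as Facts~6--10 did, except that each clause now contributes $g^1_c$ together with exactly one of $g^2_c,g^3_c,g^4_c,g^5_c$ to $S$, giving $a=2\eta+2\eta/3$. No auxiliary degree-$3$ helpers, no delicate edge-counting beyond what was already in Theorem~2, and the rigidity against spurious smaller EABSs follows from the same neighborhood-identity argument (Fact~11) used before.
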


\begin{proof}{
We reduce {\em Cubic Monotone $1$--IN--$3$ SAT} Problem to a decision problem corresponding to {\em Min-$a$-EABS} Problem.

Let $\Phi=(X,C)$ be an instance of {\em Cubic Monotone $1$--IN--$3$ SAT} Problem, where $\eta = |X| = |C|$.
In the following, we construct a Tanner graph $G =  (U \cup W,E)$ from $\Phi$. For each variable $x \in X$, we create two variable nodes $x_1$ and $x_2$ in $U$. For each clause $c \in C$, we create a copy $ \mathcal{F}_{c,2\eta+1}$ of the gadget  shown in Fig. \ref{graphA3}.
For every clause $c \in C$, and for every variable $x \in X$ that appears in $c$, we connect variable nodes $x_1$ and $x_2$ to the check nodes $w_c^1$ and $w_c^2$, respectively.

\begin{figure}[ht]
\begin{center}
\includegraphics[scale=.7]{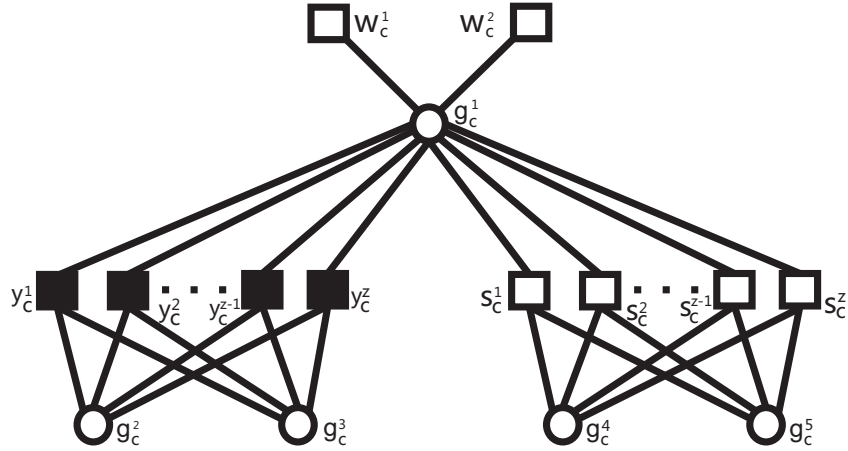}
\caption{Gadget $\mathcal{F}_{c,z}$.
} \label{graphA3}
\end{center}
\end{figure}

Note that Tanner graph $G$ has $7\eta$ variable nodes and $\eta (4\eta+4)$ check nodes.
We use $G$ as the input instance of {\em Min-$a$-EABS} Problem and select $b = \eta (2\eta+1)$ as the input value of $b$.
Now, in what follows, we first show that if $G$ has an $(a,\eta (2\eta+1))$ EABS, then we must have $a=2\eta‎+ ‎2\eta/3$ (Fact 18). Next, we prove that 
$G$ has a $(2\eta‎+ ‎2\eta/3,\eta (2\eta+1))$ EABS if and only if $\Phi$ has a $1$--IN--$3$ truth assignment. (Again, the fact that any approximate polynomial-time algorithm can find the exact solution to 
{\em Min-$a$-EABS} for $G$ and the input $b=\eta (2\eta+1)$ implies that the problem is NP-hard to approximate.)

In the following, we study some properties of EABSs in $G$:
\begin{itemize}
\item {{\bf Fact 11}} Consider an EABS $S$ in $G$. For each $c\in C$, we have either $\{y^1_c, y^2_c, \ldots, y^{2\eta+1}_c\}\subset N_o(S)$ or $\{y^1_c, y^2_c, \ldots, y^{2\eta+1}_c\}\subset N_e(S)$ or none of the nodes $y^1_c, y^2_c, \ldots, y^{2\eta+1}_c$ is in $N(S)$. Similarly, for each $c\in C$, we have $\{s^1_c, s^2_c, \ldots, s^{2\eta+1}_c\}\subset N_o(S)$ or $\{s^1_c, s^2_c, \ldots, s^{2\eta+1}_c\}\subset N_e(S)$ or none of the nodes $s^1_c, s^2_c, \ldots, s^{2\eta+1}_c$ is in $N(S)$.
\item {{\bf Fact 12}} For an EABS $S$ in $G$, if $\{y^1_c, y^2_c, \ldots, y^{2\eta+1}_c\}\subset N_o(S)$, then $ g^1_c\in S$, $ g^2_c\notin S$ and $ g^3_c\notin S$.
Similarly, if $\{s^1_c, s^2_c, \ldots, s^{2\eta+1}_c\}\subset N_o(S)$, then $ g^1_c\in S$, $ g^4_c\notin S$ and $ g^5_c\notin S$.
\item {{\bf Fact 13}} For an EABS $S$ in $G$, if $\{y^1_c, y^2_c, \ldots, y^{2\eta+1}_c \}\subset N_o(S)$, then $\{s^1_c, s^2_c, \ldots, s^{2\eta+1}_c\}\subset N_e(S)$. (Otherwise variable node $g^1_c$ is connected to
more nodes in $N_o(S)$ than in $N_e(S)$, which is in contradiction with the definition of absorbing sets.) Similarly, if $\{s^1_c, s^2_c, \ldots, s^{2\eta+1}_c\}\subset N_o(S)$, then $\{y^1_c, y^2_c, \ldots, y^{2\eta+1}_c\}\subset N_e(S)$.
\item {{\bf Fact 14}} Consider an EABS $S$ in $G$ with $b=\eta (2\eta+1)=2\eta^2+\eta$.  For each $c\in C$, $\{y^1_c, y^2_c, \ldots, y^{2\eta+1}_c\}\subset N_o(S)$ or  $\{s^1_c, s^2_c, \ldots, s^{2\eta+1}_c\}\subset N_o(S)$.
(Otherwise, by Fact 13, $b$ can be at most $2\eta^2+\eta-1$.)
\item {{\bf Fact 15}} Consider an EABS $S$ in $G$ with $b=\eta (2\eta+1)$. For each $c\in C$, $g^1_c\in S$, and exactly one of the variable nodes $g^2_c$, $g^3_c$, $g^4_c$, or $g^5_c$ is also in $S$. (Follows from Facts 13 and 14.)
\item {{\bf Fact 16}} For an EABS $S$ in $G$ with $b=\eta (2\eta+1)$, for each $c\in C$, $w_c^1,w_c^2\in N_e(S)$. (By Facts 14 and 15, and the definition of absorbing sets.)
\item {{\bf Fact 17}} For an EABS $S$ in $G$ with $b=\eta (2\eta+1)$, for each $c\in C$, the check node $w_c^1$ has exactly one neighbor in the set $\{x_1: x\in X\}$. Similarly,
for each $c\in C$, the check node $w_c^2$ has exactly one neighbor in the set $\{x_2: x\in X\}$.
\item {{\bf Fact 18}} For an $(a,b)$ EABS $S$ in $G$ with $b=\eta (2\eta+1)$, we have $a=2\eta+ 2\eta/3$. (Follows from Facts 15--17, and that the degree of each variable node in $\{x_1,x_2: x\in X\}$ is three.)
\end{itemize}

Now, we prove that $G$ has an $(\eta+ 2\eta/3,\eta (2\eta+1)) $ EABS  if and only if $\Phi$ has a $1$--IN--$3$ truth assignment.
Assume that $G$ has a $(2\eta+ 2\eta/3,\eta (2\eta+1)) $ EABS $S$. Define the truth assignment $\ell$ such that for each $x \in X$, $\ell(x)=T$ if $x_1\in S$, and $\ell(x)=F$, otherwise.
By Fact 17, $\ell$ is a $1$--IN--$3$ truth assignment for $\Phi$. Conversely, assume that $\Phi$ has a $1$--IN--$3$ truth assignment $\ell$. Let $S $ be a subset of $U$
such that for each $x \in X$, we have $x_1,x_2\in S$ if and only if $\ell(x)=T$. Also, for each $c\in C$, add $g^1_c$ and $ g^2_c$ to $S$. It is easy to see that $S$ is a  $(2\eta+ 2\eta/3,\eta (2\eta+1)) $ EABS in $G$.
}\end{proof}

\section{CONCLUSION}
\label{sec5}

In this paper, we discussed the computational complexity of finding leafless elementary trapping sets (LETSs) and elementary absorbing sets (EABSs) of LDPC codes,
and proved that such problems are NP-hard to even approximate within any approximation factor. This, under the assumption of $\text{P}\neq \text{NP}$, implies that there does not exist any polynomial-time algorithm to find
such structures (of even the smallest size). The hardness results proved in this paper for LETSs and EABSs are stronger than similar results proved in~\cite{mcgregor2010hardness} for trapping sets and ETSs, in the sense that,
while the results of~\cite{mcgregor2010hardness} indicate that there is no polynomial-time $\epsilon$-approximation algorithm, for any constant $\epsilon$, to solve the TS and ETS problems, 
our results imply that no polynomial-time approximation algorithm exists to solve the LETS and EABS problems no matter how fast the approximation factor of the algorithms scales with the size of the problem.

We also note that the intermediate results obtained in this work on NP-hardness of {\em   Monotone $2$--IN--$\beta$ SAT}, {\em  Cubic Monotone $2$-IN-$\beta$ SAT}, and  
{\em  $\alpha$--Monotone $2$--IN--$\beta$ SAT} Problems may be useful in the study of the hardness of other problems which are of interest in Computer Science or Coding.  

\bibliographystyle{ieeetr}

\end{document}